\documentclass[12pt]{iopart}

\usepackage{amssymb}
\usepackage{epsfig}
\usepackage{color}
\usepackage{adjustbox}
\usepackage{float}
\usepackage{verbatim}
\usepackage[bookmarks]{hyperref}
\usepackage{amssymb,amsfonts,latexsym,amsthm,wrapfig,graphicx, hyperref,  
}
\usepackage{color}

\usepackage{amsfonts}
\usepackage{amssymb}
\usepackage{enumerate}
\usepackage{amsthm}
\usepackage{graphicx}
\usepackage[english]{babel}
\usepackage{hyperref}

\newtheorem{theorem}{Theorem}[section]
\newtheorem*{theorem*}{Theorem}

\newtheorem*{corollary*}{Corollary}
\newtheorem*{claim*}{Claim}

\newtheorem{fact}[theorem]{Fact}
\newtheorem{proposition}[theorem]{Proposition}

\theoremstyle{definition}

\begin{document}

\title{Newton's identities and positivity of trace class integral operators}

\author{G.~Homa$^{1,2}$, R.~Balka$^3$, J.~Z.~Bern\'ad$^4$, M.~K\'aroly$^5$ and A.~Csord\'as$^6$}

\address{$^1$ Wigner Research Centre for Physics, Konkoly-Thege M.~\'ut 29-33, H-1121 Budapest, Hungary}
\address{$^2$ Department of Physics of Complex Systems, E\"{o}tv\"{o}s Lor\'and University, ELTE, P\'azm\'any P\'eter s\'et\'any 1/A, H-1117 Budapest, Hungary}
\ead{ggg.maxwell1@gmail.com} 

\address{$^3$ Alfr\'ed R\'enyi Institute of Mathematics, Re\'altanoda utca 13-15., H-1053 Budapest, Hungary}
\ead{balka.richard@renyi.hu}

\address{$^4$ Peter Gr\"unberg Institute (PGI-8), Forschungszentrum J\"ulich, D-52425 J\"ulich, Germany}
\ead{j.bernad@fz-juelich.de}

\address{$^5$ Balasys IT Zrt. 1117. Budapest, Aliz utca 4., Hungary}
\ead{miklos.karoly@protonmail.ch}

\address{$^6$ Department of Physics of Complex Systems, E\"{o}tv\"{o}s Lor\'and University, ELTE, P\'azm\'any P\'eter s\'et\'any 1/A, H-1117 Budapest, Hungary}
\ead{csordas@tristan.elte.hu}

\date{\today}

\begin{abstract}
We provide a countable set of conditions based on elementary symmetric polynomials that are necessary and sufficient for a trace class integral operator to be positive semidefinite, which is an important cornerstone for quantum theory in phase-space representation. We also present a new, efficiently computable algorithm based on Newton's identities. Our test of positivity is much more sensitive than the ones given by the linear entropy and Robertson-Schr\"odinger's uncertainty relations; our first condition is equivalent to the non-negativity of the linear entropy.
\end{abstract}
\vspace{2pc}
\noindent{\it Keywords}: density operators, positive semidefinite operators, quantum theory, phase-space representation, trace class operators, elementary symmetric polynomials

\submitto{\JPA}

\maketitle

\section{Introduction}

Quantum systems are described in terms of density operators, or, in mathematical language, positive trace class operators with trace one \cite{Neumann}. In infinite-dimensional Hilbert spaces, this is a rather abstract object, but with the help of the phase-space representation, the density operator becomes a so-called quasi-probability distribution. The first of these was introduced by Wigner \cite{Wigner}. However, the concept of phase-space representation of a self-adjoint operator was already proposed by Weyl a few years earlier \cite{Weyl}, which he called Hermitian forms. The power of this method was first demonstrated by Moyal \cite{Moyal}: later it has found many applications in quantum chemistry, statistical mechanics, and quantum optics \cite{Hillery, Lee, Schleich, Weinbub}.

Dynamics in the phase-space representation result in partial differential equations, e.g.~the classical Liouville equation for the Wigner function, therefore these exact equations are successfully used for descriptions of open quantum systems \cite{book1}, like the quantum Brownian motion \cite{HuPazZhang92,Halliwell2012}. However, these exact equations are usually subject to further assumptions, which may lead to violations of the positivity of the density operator \cite{Gnutzmann}. Testing of these positivity violations is usually hard in the phase-space representation \cite{BLH}. This is an essential problem for the consistency check of different models, nonetheless, from the foundational point of view of quantum mechanics the characterization of positivity with the so-called KLM conditions has already started in the $1960$s \cite{Kastler, Loupias1, Loupias2}.

Further studies on trace-class operators in phase-space representation have been carried out \cite{Narkowich1, Narkowich2, Werner, Luef}, but the positivity of the operator was usually provided by a non-countable set of conditions. Recently, a countable set of conditions with the help of Gabor frames was found \cite{2019}, where one needs to test the positivity of matrices in which entries are calculated with the help of a lattice structure. In this article we also provide a countable set of conditions, which are necessary and sufficient for the positivity of a self-adjoint trace class operator. Furthermore, they require a tractable computational process, which we demonstrate by examples.

The paper is organized as follows. In Sec.~\ref{sec:II} we establish notations and our main results, which are obtained by using properties of trace class operators and elementary symmetric polynomials. We apply the derived set of conditions to different examples in Sec.~\ref{sec:III} and compare them with some frequently used simple tests. In Sec.~\ref{sec:IV} we summarize and draw our conclusions.

\section{Theoretical and mathematical background }
\label{sec:II}

Recall that $L^2(\mathbb{R}^n)$ denotes the Hilbert space of complex-valued square-integrable functions defined on $\mathbb{R}^n$. We consider Hilbert-Schmidt operators $\hat{\rho}$ in the form 
\begin{equation}
 \left ( \hat{\rho} f \right) (x) = \int_{-\infty}^\infty \, \rho(x,y) f(y)\, \mathrm{d}y, \label{eq:densitydef}
\end{equation}
where $\rho \in L^2(\mathbb{R}^2)$ is the kernel and $f \in L^2(\mathbb{R})$, see \cite{book3}. 
The self-adjointness property is necessary for the positivity of $\hat{\rho}$, and it comes with $\rho(x,y)=\rho^*(y,x)$, where $z^*$ denotes the complex conjugate of $z$, thus we only consider self-adjoint operators from now on. Every Hilbert-Schmidt operator is compact, that is, the closure of the image of the open unit ball under the operator is compact \cite{Rudin}. Therefore, if $\hat{\rho}$ is a (self-adjoint) compact operator then it has only countably many eigenvalues $\{\lambda_n\}^\infty_{n=0}$, see \cite[Theorem~4.25]{Rudin}. The eigenvalue equation of $\hat{\rho}$ is a Fredholm-type integral equation
\begin{equation}
\int_{-\infty}^\infty \, \rho(x,y) \phi_n(y)\,\mathrm{d}y=\lambda_n \phi_n(x).  \label{eq:eigeneq}
\end{equation}
A self-adjoint Hilbert-Schmidt operator $\hat{\rho}$ is trace class if 
\begin{equation}
 \|\hat{\rho}\|_1=\sum_{i=0}^{\infty} |\lambda_i| < \infty. \nonumber
\end{equation}
If $\rho$ is continuous, we also have the formula
\begin{equation}
 \mathrm{Tr}\{\hat{\rho} \}= \sum_{i=0}^{\infty} \lambda_i=\int_{-\infty}^\infty \, \rho(x,x) \, \mathrm{d}x. \label{eq:unittrace}
\end{equation}
We will consider (self-adjoint) trace class integral operators throughout the paper. In quantum mechanical descriptions of physical systems, these eigenvalues are probabilities, thus it is required that $1\geq \lambda_n \geq 0$, i.e.~$\hat{\rho}$ is a positive semidefinite operator, and $\mathrm{Tr}\{\hat{\rho}\}=1$. 

Let $\mathcal{P}=\{p(x)e^{-x^2/2}: p\textrm{ is a complex polynomial from } \mathbb{R} \textrm{ to }\mathbb{C} \}$. As the  weighted polynomials $\{x^n e^{-x^2/2}: n\geq 0\}$ form a basis for $L^2(\mathbb{R})$ (see e.g.~\cite{Johnston}), it follows that $\mathcal{P}$ is dense in $L^2(\mathbb{R})$. Assume that $\hat{\sigma}$ is a positive semidefinite operator with unit trace and kernel $\sigma(x,y)$, and the function $g$ satisfies 
\begin{equation} \label{gf} 
gf\in L^2(\mathbb{R}) \textrm{ for all } f\in \mathcal{P}. 
\end{equation}
We claim that the kernel $\rho(x,y)=g(x)^* \sigma(x,y)g(y)$ defines a positive semidefinite operator $\hat{\rho}$, which can be normalized to have unit trace. Indeed, as the operator $\hat{\rho}$ and the inner product $\langle \cdot , \cdot \rangle$ are continuous in $L^2(\mathbb{R})$ and $\mathcal{P}$ is dense in $L^2(\mathbb{R})$, it is enough to check that $\langle f,\hat{\rho} f \rangle\geq 0$ for all $f\in \mathcal{P}$. Fix an arbitrary $f\in \mathcal{P}$. The positivity of $\hat{\sigma}$ and $gf\in L^2(\mathbb{R})$ imply that
\begin{eqnarray}
\langle f,\hat{\rho} f \rangle&=& \int \!\!\!\!\int_{\mathbb{R}^2} f(x)^* g(x)^* \sigma(x,y) g(y) f(y)\,\mathrm{d}x \, \mathrm{d}y \nonumber\\
&=&\int \!\!\!\! \int_{\mathbb{R}^2} \big[g(x)f(x)\big]^* \sigma(x,y)\big[g(y)f(y)\big]\,\mathrm{d}x \, \mathrm{d}y\ge 0,
\end{eqnarray}
 so $\hat{\rho}$ is positive semidefinite. Hence if $g_j$ satisfy (\ref{gf}) for all $j$, then the convex combinations of the form
\begin{equation}\label{convcomb}
\sum_j \alpha_j g_j(x)^* g_j(y) \sigma(x,y)
\end{equation}
are also positive semidefinite operators, where $\alpha_j\ge 0$ and $\sum_j \alpha_j=1$.

In the case of Schwartz kernels, (that is,~$\rho(x,y)$ and all of its mixed partial derivatives are rapidly decreasing, see \cite[p.~133]{SR} for the precise definition), $\hat{\rho}$ is a trace class operator, see \cite[Proposition~1.1]{Brislawn} and the remark afterwards. Schwartz kernels appear naturally when one studies the density operator of a quantum harmonic oscillator. Note that it is easy to transform our kernel $\rho(x,y)$ to the Wigner function $W(x,p)$ and vice versa:
\begin{equation} \label{eq:Wigner}
W(x,p)=\frac{1}{2 \pi \hbar} \int_{\mathbb{R}} e^{-\frac{i}{\hbar} p y} \rho\left(x+\frac{y}{2},x-\frac{y}{2}\right)\, \mathrm{d}y,   \end{equation}
where $i$ is the imaginary unit and $\hbar$ denotes the reduced Planck constant.

\subsection{Main Theorem} 
To determine positivity, our key tool will be the sequence $e_k$, which is defined as the elementary symmetric polynomials of the eigenvalues $\{\lambda_n\}_{n\geq 0}$:
\begin{equation}
e_k=\sum_{0\leq i_1<\dots<i_k} \lambda_{i_1}\cdots \lambda_{i_k}, \quad \mathrm{for} \quad k \geq 1.
\label{eq:symmetric_polynomials_and_e_k}
\end{equation}
If $\hat{\rho}$ is positive semidefinite operator, then $\lambda_n\geq 0$ for all $n\geq 0$, and it is straightforward that $e_k \geq 0$ for each $k\geq 1$. The reverse implication also holds. Our tests depend on the following important claim:
\begin{proposition} \label{p:claim}
\begin{equation*}
e_k \geq 0  \textrm{ for each }  k \geq 1 ~  \Longrightarrow ~  \lambda_n \geq 0 \textrm{ for all } n \geq 0.
\end{equation*}
\end{proposition}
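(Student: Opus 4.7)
The plan is to prove the contrapositive: assuming some eigenvalue is strictly negative, I will exhibit a $k$ with $e_k<0$. The key ingredient is the trace-class condition $\sum_{n=0}^\infty |\lambda_n|<\infty$, which lets me encode the whole spectrum in a single entire generating function.

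First I would introduce
\[
F(t) := \prod_{n=0}^\infty (1+\lambda_n t).
\]
Since $\sum_n |\lambda_n t| = |t|\sum_n|\lambda_n|<\infty$ for every $t\in\mathbb{C}$, the infinite product converges absolutely and defines an entire function whose zeros are exactly the points $-1/\lambda_n$ corresponding to those $n$ with $\lambda_n\neq 0$. Next I would identify the Taylor coefficients of $F$ with the elementary symmetric polynomials $e_k$: for a finite product this is the classical generating-function identity, and to pass to the infinite-product limit I would use the crude estimate $|e_k|\leq \tfrac{1}{k!}\bigl(\sum_n|\lambda_n|\bigr)^k$, which shows that $\sum_k e_k t^k$ converges absolutely for every $t$ and legitimizes the term-by-term passage to the limit. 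The outcome is
\[
F(t) = 1 + \sum_{k=1}^\infty e_k t^k
\]
as entire functions, with the standard convention $e_0=1$.

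With this in hand the contradiction is immediate. Suppose some $\lambda_j<0$ and set $t_0:=-1/\lambda_j>0$. The $j$-th factor of the product vanishes at $t_0$, so $F(t_0)=0$. On the other hand, if every $e_k$ with $k\geq 1$ were non-negative, evaluating the Taylor series at the positive number $t_0$ would give
\[
F(t_0) = 1 + \sum_{k=1}^\infty e_k t_0^k \geq 1,
\]
contradicting $F(t_0)=0$. Hence at least one $e_k$ must be strictly negative, proving the contrapositive.

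The only delicate point, and the main place where care is needed, is the rigorous identification of the infinite product $F$ with its Taylor series $\sum_k e_k t^k$; once that is established, the rest is a one-line sign argument. Everything else uses only the trace-class hypothesis and the definition of $e_k$.
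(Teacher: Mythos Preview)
Your proposal is correct and follows essentially the same route as the paper: both establish the generating-function identity $\prod_n(1+\lambda_n t)=\sum_k e_k t^k$ using the trace-class bound, then derive a contradiction by evaluating at $t_0=-1/\lambda_j>0$ for a putative negative eigenvalue. The only cosmetic difference is that the paper supplies a self-contained, purely real-variable proof of the product--series identity rather than invoking the language of entire functions.
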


\begin{proof}
Set $e_0=1$, by \cite[Lemma~3.3]{book3}; we obtain
\begin{equation} \label{eq:x}
\prod_{n=0}^{\infty} (1+\lambda_nx)=\sum_{k=0}^{\infty} e_k x^k \quad  \textrm{is finite for all } x\in \mathbb{R}.
\end{equation}
Although (\ref{eq:x}) is known, for the readers' convenience and to make the proof self-contained, we provide an easier, elementary proof for it, which does not use the theory of complex functions. 

Let us fix an arbitrary real $x$. First, we show that $\prod_{n=1}^{\infty} (1+|\lambda_nx|)<\infty$. By the Taylor expansion of $\log(1+x)$ there exists an $0<\varepsilon<1$ such that $|\log(1+|x|)|<2|x|$ whenever $|x|<\varepsilon$. Let $N=N(x,\varepsilon)$ be a sufficiently large positive integer such that $|\lambda_n x|<\varepsilon$ for all $n\geq N$. Clearly, it is enough to prove that $\prod_{n\geq N} (1+|\lambda_nx|)$ is finite. We can estimate its logarithm as
\begin{eqnarray*}
\left| \log\left( \prod_{n\geq N} (1+|\lambda_nx|)\right) \right|&\leq \sum_{n\geq N} |\log (1+|\lambda_n x|)|
\\ &\leq \sum_{n\geq N} 2|\lambda_n x|= 2|x| \sum_{n\geq N}|\lambda_n|<\infty,
\end{eqnarray*}
hence $\prod_{n=1}^{\infty} (1+|\lambda_nx|)$ is finite.

After the expansion $\prod_{n=1}^{m+1}  (1+\lambda_nx)$ contains all terms of $\prod_{n=1}^{m}  (1+\lambda_nx)$,  so the product $\prod_{n=1}^{\infty} (1+\lambda_nx)=\lim_{m\to \infty} \prod_{n=1}^m (1+\lambda_nx)$ is an infinite series by definition. Moreover, $\prod_{n=1}^{\infty} (1+\lambda_nx)$ and $\sum_{k=0}^{\infty} e_k x^k$ are the same series with rearranged terms. Since $\prod_{n=1}^{\infty} (1+|\lambda_nx|)$ is finite, these series are absolutely convergent, so they are equal and finite. This implies (\ref{eq:x}).

Now, we can finish our proof. Assume to the contrary that there is an integer $m\geq 0$ such that $\lambda_{m}<0$. Let $x_0=-1/\lambda_{m}$, then clearly $x_0>0$ and $\prod_{n=0}^{\infty} (1+\lambda_nx_0)=0$. Eq.~(\ref{eq:x}) implies that $\sum_{k=0}^{\infty} e_k x_0^k=0$. Using that $x_0>0$, $e_0=1$, and $e_k \geq 0$ for all $k \geq 1$, we obtain that $\sum_{k=0}^{\infty} e_k x_0^k \geq  1$, which is a contradiction. The proof is complete.
\end{proof}
Thus, when $e_k \geq 0$ for each $k \geq 1$ then $\hat{\rho}$ is a positive semidefinite operator. It is worth noting that $e_k$s have been used for the positivity test of $n \times n$ self-adjoint matrices, where it is enough to check $n-1$ conditions \cite{Gamel}.   

\subsection{Newton's identities and a useful estimation}
Now, we need to express all the $e_k$s with the help of the kernel $\rho(x,y)$. We calculate the moments $M_k$ of $\hat{\rho}$ as
\begin{eqnarray} \label{n-th_mom}
M_k = \sum_{i=0}^{\infty}\lambda_i^k=\mathrm{Tr}\{ \hat{\rho}^k\}
=\int_{-\infty}^\infty \, \rho(x_k,x_1) \prod_{i=1}^{k-1} \rho(x_i,x_{i+1}) \prod_{i=1}^k \mathrm{d}x_i.
\end{eqnarray}

Despite the countably infinite number of eigenvalues, we can still obtain Newton's identities \cite{Macdonald} in the form
\begin{equation}
e_k =\frac{1}{k!}\left|\begin{array}{ccccc}
M_1     & 1       & 0      & \cdots       \\
M_2     & M_1     & 2      & 0      & \cdots \\
\vdots  &         & \ddots & \ddots       \\
M_{k-1} & M_{k-2} & \cdots & M_1    & k-1 \\
M_k     & M_{k-1} & \cdots & M_2    & M_1
\end{array}\right|. \label{eq:e_n_from_det}
\end{equation}
We note that an equivalent option is to use the Fredholm expansion \cite[Theorem~3.10]{book3} (for the original source see \cite{Fredholm}). Furthermore, in case of $M_1=1$ the quantity $2 e_2=1-M_2$ is called linear quantum entropy in the literature, see for example \cite{LQE}. The sequence $e_k$ rapidly converges to zero. Indeed, expanding $\left(\sum_{n=0}^{\infty} |\lambda_n|\right)^k$ yields the following inequality, see \cite[Lemma~3.3 (3.4)]{book3}:
\begin{equation}
|e_k|\leq \frac{\left(\sum_{n=0}^{\infty}|\lambda_n| \right)^k}{k!} \quad \textrm{for all } k\geq 1.
\label{eq:superexp_behav}
\end{equation}
 Finally, it is worth to mention that for $\rho(\mathbf{x}, \mathbf{y}) \in L^2(\mathbb{R}^{2n})$ with $\mathbf{x}=(x_1,x_2,...,x_n)^{T}$ and $\mathbf{y}=(y_1,y_2,...,y_n)^{T}$ ($T$ denotes the transpose of vectors) our proof works verbatim.

\section{Examples}
\label{sec:III}

In this section, we apply our method to various examples. First, we look at the known case of Gaussian quantum states and then at different kernels $\rho(x, y)$ in the form of a polynomial multiplied by a Gaussian function. We also do a comparison with other approaches like the physically motivated Robertson-Schr\"odinger uncertainty relations. Throughout the entire section, we omit physical dimensions.

\subsection{The Gaussian case: a reminder}
One of the simplest examples of a density operator is the Gaussian quantum state
\begin{eqnarray}
\rho_G(x,y)&=&2\sqrt{\frac{C}{\pi}}\exp\Bigl[-\Bigl(A(x-y)^2+iB(x^2-y^2)+
 \Bigr.\Bigr.  \nonumber\\
&&\Bigl.\Bigl. +C(x+y)^2+iD(x-y)+E(x+y)+\frac{E^2}{4C}\Bigl)\Bigr],
\end{eqnarray}
with real parameters $A>0,C>0,B,D,E$. It can be checked easily that $\textrm{Tr}\{ \hat{\rho}\}=1$. Orthonormalized eigenvectors and eigenvalues are given in \cite{BCsH} (after correcting some minor errors):
\begin{eqnarray}
\phi_n(x)&=&\sqrt{\frac{2(AC)^{1/4}}{\sqrt{\pi} 2^n n!}} H_n\left(2(AC)^{1/4}\Bigl(x+\frac{E}{4C}\Bigr)\right) \nonumber \\
&&\times \exp\left[-x^2\Bigl(2\sqrt{AC}+iB\Bigr)-x\Bigl(\sqrt{\frac{A}{C}}E+iD \Bigr)-\frac{\sqrt{AC}}{8C^2}E^2\right]
\label{eq:gaussian_ef}
\end{eqnarray}
and
\begin{equation}
\lambda_n=\epsilon_0 \epsilon^n,
\end{equation}
where we used the notations
\begin{equation}
\epsilon_0=\frac{2\sqrt{C}}{\sqrt{A}+\sqrt{C}}, \quad \epsilon=\frac{\sqrt{A}-\sqrt{C}}{\sqrt{A}+\sqrt{C}}, \quad r=2(AC)^{1/4}, \quad s=\frac{E}{4C}.
\label{eq:gaussian_spectrum}
\end{equation}
Here $H_n$ is the $n$-th Hermite polynomial. From the spectrum of $\hat{\rho}_G$ it is clear that $\hat{\rho}_G$ is a positive operator precisely if
\begin{equation}
 A \geq C >0.
\label{eq:allowed_gaussian_params}
\end{equation}
The moments defined in (\ref{n-th_mom}) are given by
\begin{equation}
M_k=\frac{\epsilon_0^k}{1-\epsilon^k},
\end{equation}
and one can check that all the $e_k$s are strictly positive if and only if $A\geq C$.

\subsection{Linear polynomials multiplied by a Gaussian}

More interesting behavior of $e_k$s can be exhibited if $\rho_G(x,y)$ is multiplied by a self-adjoint polynomial with real coefficients and variables $x,y$. First, we consider here the case of linear polynomials:
\begin{equation*}
\rho(x,y)=(\alpha_1(x+y)+i\beta_1(x-y)+\gamma_0) \rho_G(x,y), \quad \textrm{where } (\alpha_1,\beta_1)\neq (0,0).
\end{equation*}

By calculating $e_k$s and applying Proposition~\ref{p:claim} in special cases we predicted that there exists no positive operator of the above form, see Fig.~\ref{fig:lin_gauss_path} for illustration: We define the functions $\{ \Theta_k\}_{k\geq 1}$ such that $\Theta_k(t)=1$ if $e_i(t)\geq 0$ for all $1\leq i\leq k$ and $\Theta_k(t)=0$ otherwise. Therefore, $\Theta_k$ is the indicator function of the set of parameters $t$ for which $e_i(t)\geq 0$ for all $1\leq i\leq k$, which seem to form rapidly decreasing intervals as $k\to \infty$. This suggests that $\{t: e_i(t)\geq 0 \textrm{ for all } i\geq 1\}=\emptyset$. The following theorem shows that this is indeed the case.

\begin{figure}
\includegraphics[angle=270,width=1.\linewidth]{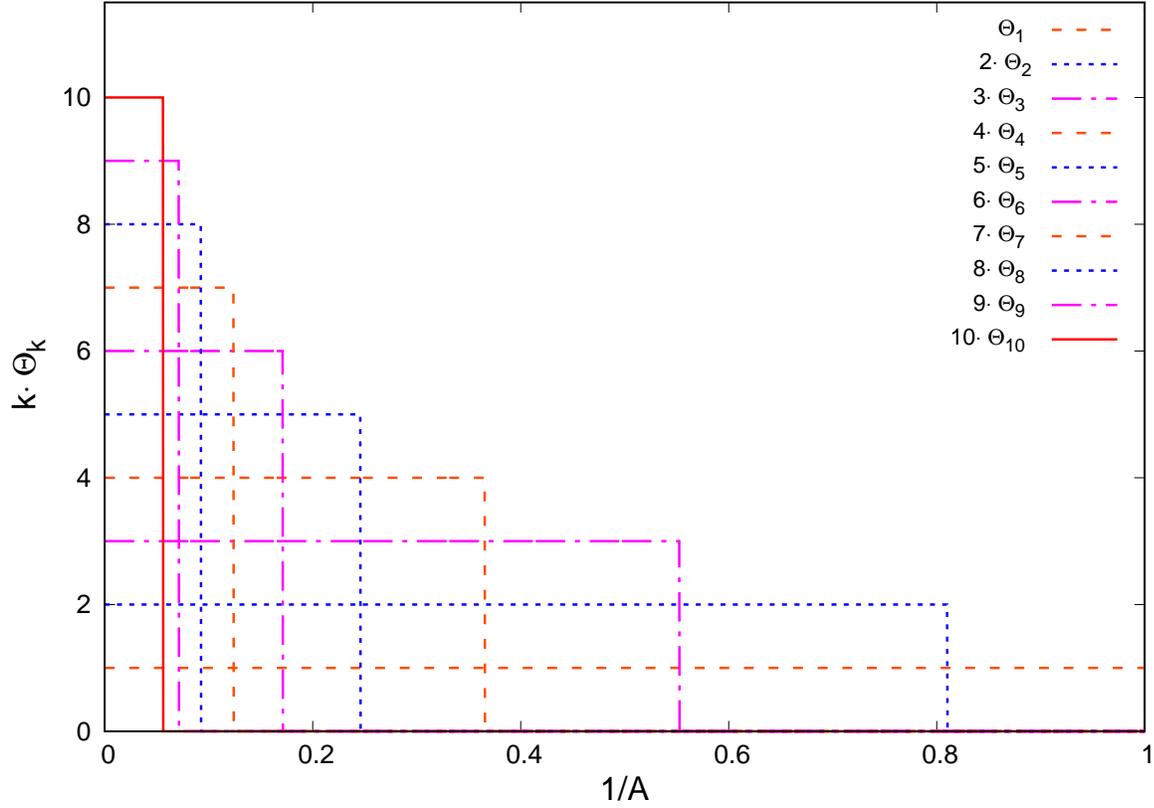}
\caption{Several $\Theta_k$s calculated for $\rho(x,y)$ of Eq.~(\ref{eq:rho_eq_gaussian_times_poly}) as functions of the Gaussian parameter $1/A$.
The parameters of the Gaussian are: $C=1$, $B=D=0$ and $E=1$. The polynomial parameters are set to $\alpha_1=1$, $\beta_1=0$, $\gamma_0=2$, and $\alpha_2=\beta_2=\gamma_2=0$.
\label{fig:lin_gauss_path}}
\end{figure}

\begin{proposition} Assume that
$$\rho(x,y)=\left(\alpha_1(x+y)+i\beta_1(x-y)+\gamma_0\right)\rho_G(x,y)$$
such that $(\alpha_1,\beta_1)\neq (0,0)$. Then $\hat{\rho}$ is not positive semidefinite.
\end{proposition}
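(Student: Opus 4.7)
The plan is to exhibit a $2\times 2$ principal submatrix of $\hat{\rho}$, in the orthonormal eigenbasis $\{\phi_n\}_{n\geq 0}$ of $\hat{\sigma}:=\hat{\rho}_G$ given in (\ref{eq:gaussian_ef}), whose determinant is strictly negative; this immediately rules out $\hat{\rho}\geq 0$. Writing $\zeta:=\alpha_1+i\beta_1$ (nonzero by hypothesis), the multiplier factors as $\zeta x+\zeta^* y+\gamma_0$, so
\begin{equation*}
\hat{\rho}=\zeta\,\hat{x}\hat{\sigma}+\zeta^*\hat{\sigma}\hat{x}+\gamma_0\hat{\sigma}.
\end{equation*}
Applying the eigenvalue equation for $\hat{\sigma}$ on the right and on the left gives, with $x_{mn}:=\langle\phi_m,\hat{x}\phi_n\rangle$,
\begin{equation*}
\langle\phi_m,\hat{\rho}\phi_n\rangle=x_{mn}\bigl(\zeta\lambda_n+\zeta^*\lambda_m\bigr)+\gamma_0\lambda_n\,\delta_{mn}.
\end{equation*}

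Next I evaluate these matrix elements. The imaginary part of the exponent in (\ref{eq:gaussian_ef}) cancels between $\phi_m^*(x)$ and $\phi_n(x)$, so after the substitution $u=r(x+s)$ with $r=2(AC)^{1/4}$ and $s=E/(4C)$ the integrand reduces to $H_m(u)H_n(u)(u/r-s)e^{-u^2}$ times a constant determined by normalization. The Hermite recursion $uH_n(u)=\tfrac12 H_{n+1}(u)+nH_{n-1}(u)$ together with the orthogonality $\int_{\mathbb{R}}H_m(u)H_n(u)e^{-u^2}\,\mathrm{d}u=2^n n!\sqrt{\pi}\,\delta_{mn}$ then yield
\begin{equation*}
x_{nn}=-s,\qquad x_{n,n+1}^2=\frac{n+1}{8\sqrt{AC}},\qquad x_{mn}=0 \text{ for } |m-n|\geq 2.
\end{equation*}

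Plugging these into the $2\times 2$ principal minor at indices $(n,n+1)$, and setting $\mu:=\gamma_0-2\alpha_1 s$ together with $K:=|\zeta\epsilon+\zeta^*|^2=\alpha_1^2(1+\epsilon)^2+\beta_1^2(1-\epsilon)^2$, the determinant becomes
\begin{equation*}
D_n=\lambda_n\lambda_{n+1}\mu^2-x_{n,n+1}^2\,|\zeta\lambda_{n+1}+\zeta^*\lambda_n|^2=\epsilon_0^2\epsilon^{2n}\Bigl[\,\epsilon\mu^2-\tfrac{(n+1)K}{8\sqrt{AC}}\Bigr].
\end{equation*}
Because $\epsilon\in(-1,1)$ and $(\alpha_1,\beta_1)\neq(0,0)$, both $1+\epsilon$ and $1-\epsilon$ are strictly positive, so $K>0$.

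A short case split closes the argument. If $\epsilon>0$, the bracket in $D_n$ tends to $-\infty$ as $n\to\infty$, so $D_n<0$ for sufficiently large $n$. If $\epsilon<0$, then $\epsilon\mu^2\leq 0$ while $(n+1)K/(8\sqrt{AC})>0$ and $\epsilon^{2n}>0$, so already $D_0<0$. If $\epsilon=0$ (i.e.\ $A=C$), a direct computation from the $(0,1)$ block gives $D_0=-\epsilon_0^2 K/(8\sqrt{AC})<0$. In each case a $2\times 2$ principal submatrix of $\hat{\rho}$ in this orthonormal basis fails to be positive semidefinite, which contradicts $\hat{\rho}\geq 0$. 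The main obstacle is the Hermite-integral computation producing the key growth $x_{n,n+1}^2=(n+1)/(8\sqrt{AC})$: it is precisely this linear blow-up that prevents any choice of $\gamma_0$ from suppressing the off-diagonal contribution when $\epsilon>0$, and hence is the decisive ingredient of the proof.
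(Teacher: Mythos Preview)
Your proof is correct, but the argument is genuinely different from the paper's. The paper tests $\hat{\rho}$ against a single Gaussian trial function $\Psi(x)=\exp(-x^2+bx+icx)\exp(-iBx^2+Ex-iDx)$, computes the Gaussian integral $\langle\Psi,\hat{\rho}\Psi\rangle$ in closed form, and observes that the only sign-indefinite factor is the linear polynomial $(2A+1)\alpha_1 b+(2C+1)\beta_1 c+(2A+1)(2C+1)\gamma_0$ in the free parameters $b,c$, which can always be driven negative. Your route instead exploits the explicit eigenbasis $\{\phi_n\}$ of $\hat{\rho}_G$: you write $\hat{\rho}=\zeta\hat{x}\hat{\sigma}+\zeta^{*}\hat{\sigma}\hat{x}+\gamma_0\hat{\sigma}$, reduce everything to the tridiagonal position matrix $x_{mn}$, and detect non-positivity through a negative $2\times2$ principal minor. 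The Hermite computations and the case split over the sign of $\epsilon$ are all correct; in particular the linear growth $x_{n,n+1}^{2}=(n+1)/(8\sqrt{AC})$ is the structural reason the off-diagonal term always wins.

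What each approach buys: the paper's proof is a short self-contained Gaussian integral that avoids any spectral information about $\hat{\rho}_G$, whereas your argument leans on the explicit eigensystem already recorded in the paper and in return yields more structural insight (the tridiagonal picture and the mechanism $x_{n,n+1}^{2}\propto n+1$). Both are elementary; yours also meshes naturally with the band-matrix discussion in the quadratic case later in the paper.
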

\begin{proof}
Define the complex, square integrable function
\begin{equation*}
    \Psi(x):=\exp\left({-x^2+bx+icx}\right)\times \exp\left({-iBx^2+Ex-iDx}\right),  
    \end{equation*}
where $b,c \in \mathbb{R}$.
It is sufficient to show that the integral
\begin{equation*}
\Pi=\Pi(b,c):=\int_{\mathbb{R}^2}{\Psi^*(x)\Psi(y)\rho(x,y) \, \mathrm{d} x \, \mathrm{d}y}
\end{equation*}
can attain negative values for some parameters $b,c$. We can calculate that
\begin{eqnarray*}
\Pi=2 \sqrt{\pi C} e^{-\frac{E^2}{4C}} \left( (2A+1)(2C+1) \right) ^{-3/2} e^{\frac {2Ab^2-2Cc^{2}+b^{2}-c^{2}}{2(4AC+2A+2C+1)}}
  \\
\times \left((2A+1)\alpha _{1}b+(2C+1)\beta _{1}c+ (2A+1)(2C+1)\gamma _{0}\right).
\end{eqnarray*}

Note that only the last factor of $\Pi$ might be non-positive. Since $(\alpha_1, \beta_1)\neq (0,0)$, this factor is a non-constant linear polynomial in the variables $b$ and $c$, which can clearly attain negative values. This completes the proof. 
\end{proof}

\subsection{Quadratic polynomials multiplied by a Gaussian}

Now we consider second degree, self-adjoint polynomials multiplied by a Gaussian, in which case we will obtain more sophisticated behaviour from the point of view of positivity. Define
\begin{eqnarray} \label{eq:rho_eq_gaussian_times_poly}
\rho(x,y)&=&\frac 1N \rho_G(x,y) \Bigl( \alpha_2(x-y)^2+i\beta_2(x^2-y^2) \Bigl. \nonumber\\
&&\Bigl. +\gamma_2(x+y)^2+\alpha_1(x+y)+i\beta_1(x-y)+\gamma_0\Bigr),
\end{eqnarray}
where
\begin{equation*}
N=\gamma_0+\displaystyle\frac{\gamma_2-\alpha_1E}{2C}+\displaystyle\frac{\gamma_2 E^2}{4C^2}
\end{equation*}
is a normalization factor ensuring $\textrm{Tr}\{\hat{\rho}\}=1$. 

The calculation of the quantities $e_k$ can be done directly through Eqs.~(\ref{n-th_mom}) and (\ref{eq:e_n_from_det}). In case of Eq.~(\ref{eq:rho_eq_gaussian_times_poly}), there is an alternative way to obtain the moments $M_k$. One needs to calculate the matrix elements $\rho_{m,n}=\langle \phi_m, \hat{\rho}\phi_n \rangle$ of $\hat{\rho}$ (see Eq.~(\ref{eq:rho_eq_gaussian_times_poly})) between the states $m,n$ explicitly given by (\ref{eq:gaussian_ef}). By repeated use of the well-known recursion for Hermite polynomials
\begin{equation}
    H_{n+1}(x)=2x H_n(x)-2n H_{n-1}(x),
\end{equation}
it turns out that $\rho_{m,n}$ is a band matrix with two subdiagonals below and above the diagonal, that is,  $\rho_{m,n}=0$ if $|m-n|>2$. Explicit form for the matrix elements can be obtained as
\begin{eqnarray}
\rho_{m,n}&=&\sqrt{n(n-1)}\epsilon^{n-2}(a_2 -i b_2)\delta_{n,m+2}+\sqrt{n}\epsilon^{n-1}(a_1 -i b_1)\delta_{n,m+1} \nonumber\\
&&+\epsilon^n (a_0+n b_0)\delta_{n,m}+\sqrt{n+1}\epsilon^{n}(a_1 +i b_1)\delta_{n,m-1} \nonumber \\
&&+\sqrt{(n+1)(n+2)}\epsilon^{n}(a_2 +i b_2)\delta_{n,m-2},
\end{eqnarray}
where $\delta_{n,m}$ is the Kronecker delta and the quantities $a_2,b_2,a_1,b_1,a_0$ and $b_0$ are real-valued constants. They depend linearly on the polynomial parameters $\alpha_2,\beta_2,\gamma_2,\alpha_1,\beta_1,\gamma_0$, but non-linearly on the Gaussian parameters used in Eq. (\ref{eq:gaussian_spectrum}). Namely:
\begin{eqnarray}
a_0&=&\frac{1}{r^2}\left((1-\epsilon)\alpha_2+(1+4 r^2s^2 +\epsilon) \gamma_2-2 r^2 s \alpha_1 +r^2 \gamma_0 \right), \nonumber \\
b_0&=&\frac{\epsilon_0}{r^2\epsilon}\left(-(1-\epsilon)^2\alpha_2+(1+\epsilon)^2 \gamma_2 \right), \nonumber \\
a_1&=& -\frac{\epsilon_0(1+\epsilon)}{\sqrt{2}r} \left(4 s \gamma_2-\alpha_1\right), \nonumber \\
b_1&=& \frac{\epsilon_0(1-\epsilon)}{\sqrt{2}r} \left(2s\beta_2-\beta_1\right),\nonumber \\
a_2&=& \frac{\epsilon_0}{2 r^2}\left((1-\epsilon)^2\alpha_2+(1+\epsilon)^2\gamma_2 \right), \nonumber \\
b_2&=& -\frac{\epsilon_0(1-\epsilon^2)}{2 r^2} \beta_2.
\end{eqnarray}
After finding the matrix elements, the summation over the diagonal elements of $k$-th power of this latter matrix yields $M_k$. Fortunately, this expression involves some combinations of the form $\sum_{n=0}^\infty n^p \epsilon^{nq}$ with integers $p,q$ and $\epsilon$ from Eq.~(\ref{eq:gaussian_spectrum}), which can be given explicitly. In fact, this serves as a validation of the numerical evaluation of Eq.~(\ref{n-th_mom}).

\subsection{Testing our method for families where positivity is understood}

From now on, our kernels will be quadratic polynomials multiplied by a Gaussian. First, we apply our method in a special situation, where $\hat{\rho}$ is a positive semidefinite operator in the entire parameter region. In Fig.~\ref{fig:ek_are_good} we use $\rho(x,y)\propto (4xy+1)\rho_G(x,y)$. As $g(x)=x$ satisfies property (\ref{gf}), we obtain that $\rho(x,y)$ is proportional to a convex combination of the form (\ref{convcomb}), hence positive. In this case we fix the Gaussian parameter $A=1$, and plot some $e_k$s in the region of positivity allowed by Eq.~(\ref{eq:allowed_gaussian_params}). As expected, we obtained positive values for all calculated $e_k$s in the region $0<\sqrt{C}\leq  1$, see Fig.~\ref{fig:ek_are_good}.
\begin{figure}
\includegraphics[angle=270,width=1.\linewidth]{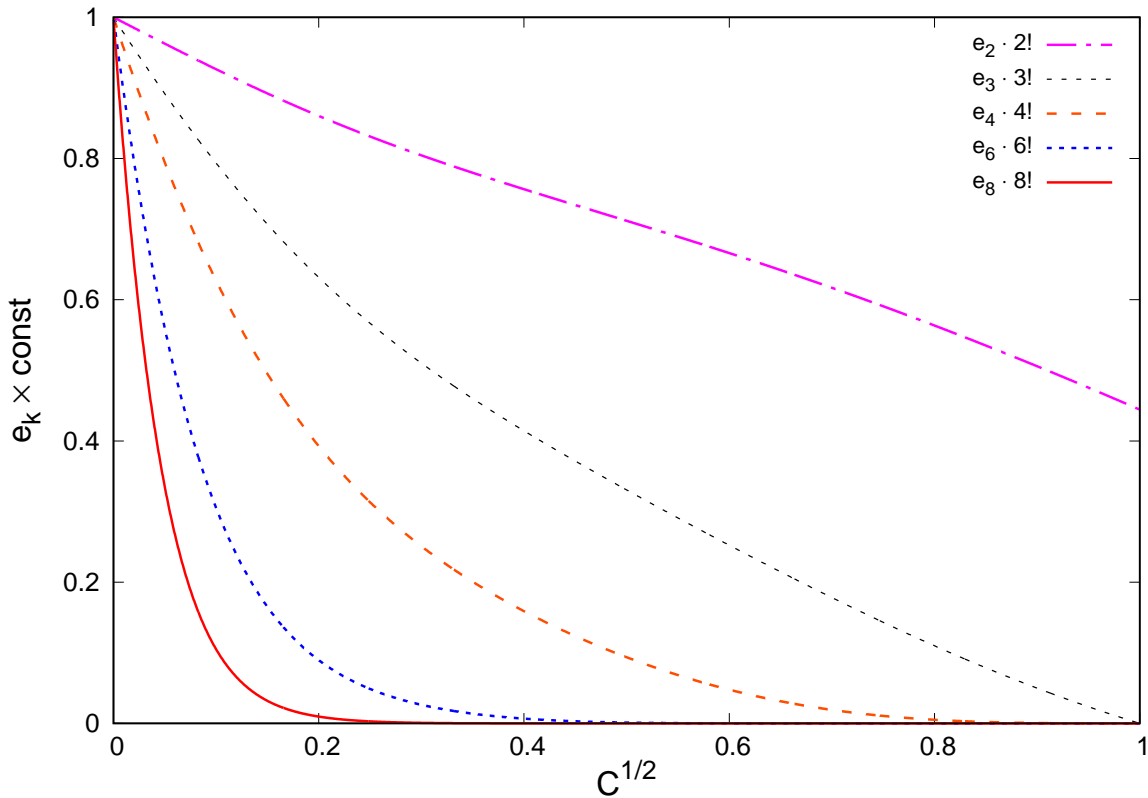}
\caption{Several $e_k$s as functions of $\sqrt{C}$. For better visualization $e_k$s are scaled by $k!$. The parameters of the Gaussian are $A=1$, $B=D=E=0$, the polynomial parameters are chosen to be: $\alpha_2=-1$, $\gamma_2=\gamma_0=1$, $\alpha_1=\beta_1=\beta_2=0$.  $\rho(x,y)\propto (4xy+1)\rho_G(x,y)$.
\label{fig:ek_are_good}}
\end{figure}

Now, we examine the family $\rho(x,y)\propto (4xy+\gamma_0)\exp{\left[-\left(4(x-y)^2+(x+y)^2\right)\right]}$. We show that $\hat{\rho}$ is positive semidefinite if and only if $\gamma_0\geq 0$. Indeed, if $\gamma_0\geq 0$ then $\hat{\rho}$ is positive semidefinite by (\ref{convcomb}). Applying the following fact for $z=0$ shows that $\hat{\rho}$ is not positive semidefinite if $\gamma_0<0$. 
\begin{fact} Let $\rho$ be a kernel and $z\in \mathbb{R}$ such that $\rho$ is continuous at $(z,z)$. If $\hat{\rho}$ is positive semidefinite, then $\rho(z,z)\geq 0$.
\end{fact}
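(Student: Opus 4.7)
The plan is to test positivity against approximate delta functions concentrated at $z$. Specifically, for each $\varepsilon>0$ small enough that the continuity of $\rho$ at $(z,z)$ gives a useful modulus on the square $I_\varepsilon\times I_\varepsilon$ with $I_\varepsilon=[z-\varepsilon,z+\varepsilon]$, I would take the test function $f_\varepsilon=\chi_{I_\varepsilon}\in L^2(\mathbb{R})$ and use the assumed positivity, namely
$$0\le \langle f_\varepsilon,\hat{\rho} f_\varepsilon\rangle = \int_{I_\varepsilon}\!\!\int_{I_\varepsilon} \rho(x,y)\,\mathrm{d}x\,\mathrm{d}y,$$
where the second equality is an application of Fubini's theorem. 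This is justified because $f_\varepsilon$ is bounded with compact support and $\rho\in L^2(\mathbb{R}^2)$, so $\rho$ is in $L^1(I_\varepsilon\times I_\varepsilon)$ and $|f_\varepsilon(x)\rho(x,y)f_\varepsilon(y)|$ is integrable.

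Next I would exploit continuity at $(z,z)$ to pass to a limit. Given $\delta>0$, continuity supplies $\varepsilon_0>0$ such that $|\rho(x,y)-\rho(z,z)|<\delta$ whenever $(x,y)\in I_{\varepsilon_0}\times I_{\varepsilon_0}$. For $\varepsilon\le \varepsilon_0$ this yields
$$\left|\frac{1}{(2\varepsilon)^2}\int_{I_\varepsilon}\!\!\int_{I_\varepsilon}\rho(x,y)\,\mathrm{d}x\,\mathrm{d}y \;-\;\rho(z,z)\right|<\delta,$$
so the normalized double integral converges to $\rho(z,z)$ as $\varepsilon\to 0^{+}$. Combined with the nonnegativity from the previous step (dividing by the positive number $(2\varepsilon)^2$ preserves the inequality), this gives $\rho(z,z)\ge 0$ in the limit.

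I do not expect any real obstacle here: the only thing to be careful about is that the identity $\langle f,\hat{\rho}f\rangle=\int\!\!\int f^*(x)\rho(x,y)f(y)\,\mathrm{d}x\,\mathrm{d}y$, which we have used (essentially as a definition) earlier in the paper in the context of Eq.~(\ref{gf}), does require $\rho|f|\otimes|f|\in L^1$, and that the continuity hypothesis is local so the pointwise value $\rho(z,z)$ is unambiguous on its continuous representative. Both points are handled by restricting to the compactly supported, bounded $f_\varepsilon$ above, which automatically places everything in $L^1$ of the relevant small square. If one prefers smooth test functions (for instance to stay inside the class $\mathcal{P}$ used earlier), one can replace $\chi_{I_\varepsilon}$ by a smooth bump $\varphi_\varepsilon\in\mathcal{P}$ approximating it, with $\|\varphi_\varepsilon\|_1^2/(2\varepsilon)^2\to 1$ and $\varphi_\varepsilon$ supported in a slightly larger interval, and the same limit argument goes through.
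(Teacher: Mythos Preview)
Your proof is correct and uses essentially the same idea as the paper's: test positivity against the indicator function of a small interval around $z$. The paper phrases it as a contradiction (if $\rho(z,z)<0$, continuity forces $\rho<0$ on a small square, so the quadratic form is negative there), whereas you normalize by $(2\varepsilon)^2$ and pass to the limit directly; the content is the same.
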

\begin{proof}
Assume to the contrary that $\rho(z,z)<0$ and $\hat{\rho}$ is positive semidefinite. By the continuity of $\rho$ at $(z,z)$ we can choose $\varepsilon>0$ such that $\rho(x,y)<0$ for all $x,y\in [z-\varepsilon,z+\varepsilon]$. Define the square integrable function $\Psi$ such that $\Psi(x)=1$ if $z-\varepsilon \leq x \leq z+\varepsilon$ and $\Psi(x)=0$ otherwise. Then clearly 
\begin{equation*}
\langle \Psi, \hat{\rho} \Psi \rangle=\int_{z-\varepsilon}^{z+\varepsilon} \int_{z-\varepsilon}^{z+\varepsilon} \Psi^{*}(x)\Psi(y) \rho(x,y) \,\mathrm{d}x \, \mathrm{d}y<0. 
\end{equation*}
This contradicts that $\hat{\rho}$ is positive semidefinite, which concludes the proof. 
\end{proof}

\begin{figure}
\includegraphics[angle=270,width=1.\linewidth]{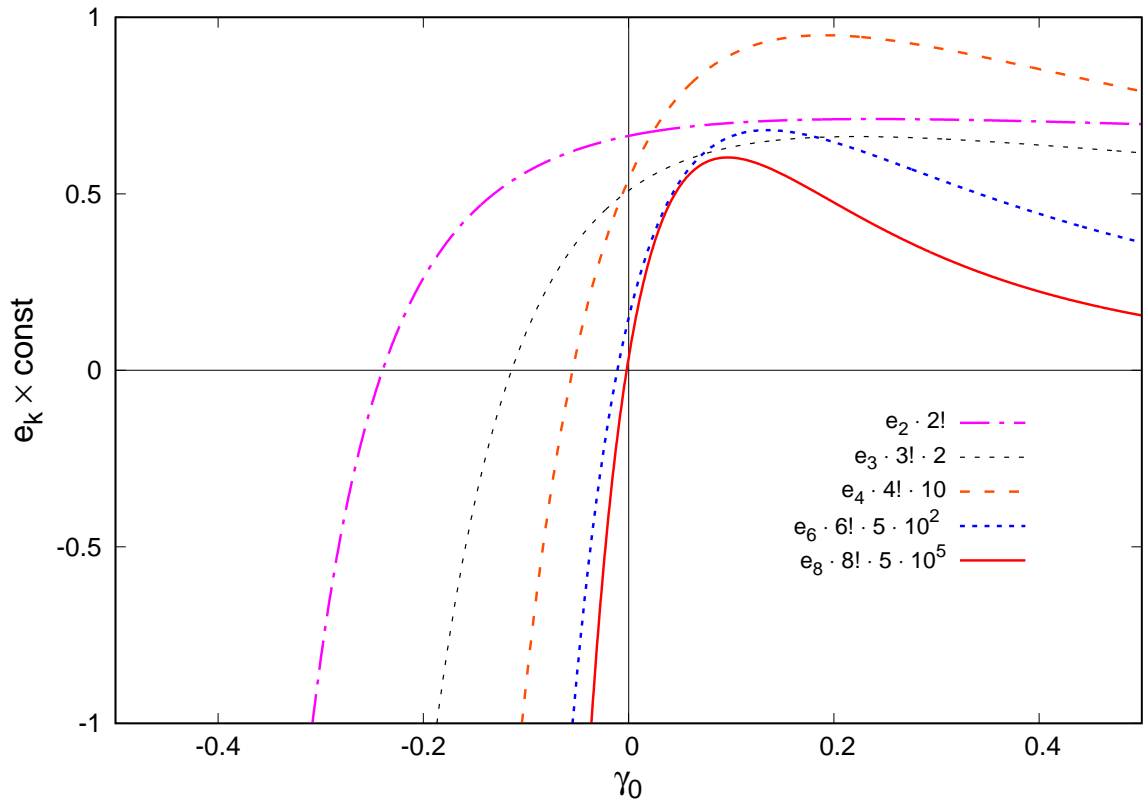}
\caption{Several $e_k$s calculated for $\rho(x,y)$ of Eq.~(\ref{eq:rho_eq_gaussian_times_poly}) as functions of the polynomial parameter $\gamma_0$.
The parameters of the Gaussian are: $A=4$, $C=1$, and $B=D=E=0$. The polynomial parameters are: $\alpha_2=-1$ and $\alpha_1=\beta_1=\beta_2=0$. For $\gamma_0<0$, $\hat{\rho}$ is not a positive operator. We scaled the $e_k$s appropriately for better visualization.\label{fig:gamma_0_path}}
\end{figure}

Let $\rho_p$ be our kernels, where the parameter $p$ runs over a subset of the Euclidian space $\mathbb{R}^d$, and $e_i(p)$ incorporates the parameter dependence of the $e_i$ quantities.    
Our numerical experience is that the sequence of sets 
\begin{equation*} H_k=\{p: e_i(p)\geq 0 \textrm{ for all } 1\leq i\leq k\}
\end{equation*} 
rapidly converges to the parameter space of positivity as $k\to \infty$.  We illustrate this behaviour with Fig.~\ref{fig:gamma_0_path}, where $p=\gamma_0$ and we know that the final set of positivity is $H_{\infty}=[0,\infty)$.

\subsection{Prediction based on our method and comparison with other approaches}

In Figs.~\ref{fig:moments_as_a_function_gamma2} and~\ref{fig:eks_as_a_function_gamma2} the parameters are chosen in such a way that at $\gamma_2=1$ the density operator is known to be positive semidefinite by (\ref{convcomb}). In the following, we also demonstrate that the moments alone (by checking if $M_k>1$, which would imply the existence of a negative eigenvalue) do not reveal too much information about the positivity of $\hat{\rho}$. Indeed, if $M_2\leq 1$, then $M_k\leq M_2\leq 1$ for all $k\geq 2$, so $M_2$ contains all the information. As $1-M_2=2e_2$, this method is equivalent to testing $e_2<0$. In Fig.~\ref{fig:moments_as_a_function_gamma2} we have plotted some moments as functions of the polynomial parameter $\gamma_2$. Note that $M_2>1$ for $\gamma_2 \lesssim -0.65$, which implies the non-positivity of $\hat{\rho}$ in that region.

In Fig.~\ref{fig:eks_as_a_function_gamma2} we have plotted some $e_k$s for the same parameters and note that $e_2<0$ provides the same region $\gamma_2 \lesssim -0.65$ that is given by $M_2>1$ above. Several $e_k$s are negative on the interval $\gamma_2>0$, where the moments do not indicate non-positivity, according to Fig.~\ref{fig:moments_as_a_function_gamma2}. However, negative values for the $e_k$s give us parameters $\gamma_2$, where the corresponding $\hat{\rho}$ is definitely a non-positive operator. The common interval, where all the calculated values $e_k$ are positive is $0 \lesssim \gamma_2 \lesssim 4$, which includes the point $\gamma_2=1$, where  $\hat{\rho}$ is a positive semidefinite operator. The calculation of $e_k$ for big $k$ is not an easy task, because they tend to zero very fast, see Eq.~(\ref{eq:superexp_behav}). To compensate this rapid decay, we have multiplied the quantities $e_k$ with appropriate numbers. Our general observation after several simulations is that if $\rho$ is a second degree polynomial multiplied by a Gaussian as above, then the parameter set of positivity after $k$ tests, i.e.~$\{\gamma: e_i(\gamma)\geq 0 \textrm{ for all } 1\leq i\leq k\}$ form a decreasing and nested sequence of sets. It also seems that these sets (which might be not connected sets in general) rapidly converge to the final set of positivity, namely $\{\gamma: e_i(\gamma)\geq 0 \textrm{ for all } i\geq 1\}$.

\begin{figure}
\includegraphics[angle=270,width=1.\linewidth]{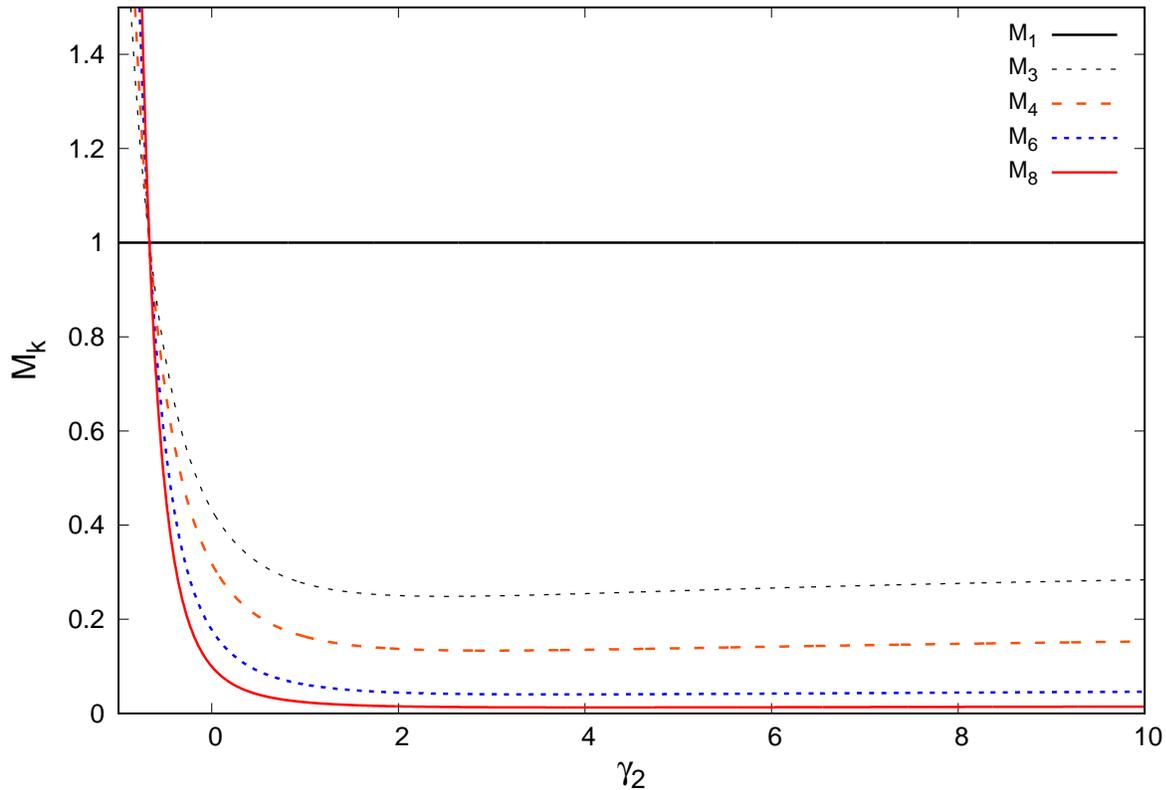}
\caption{Some moments $M_k$ of $\rho(x,y)$ (of the form Eq.~(\ref{eq:rho_eq_gaussian_times_poly})) as a function of the polynomial parameter $\gamma_2$.
The parameters of the Gaussian are $A=3/2$, $C=1$, $B=D=E=0$. The polynomial parameters are $\alpha_2=-1$, $\gamma_0=1$, $\alpha_1=\beta_1=\beta_2=0$. At $\gamma_2=1$, $\hat{\rho}$ is a positive operator. \label{fig:moments_as_a_function_gamma2}}
\end{figure}

\begin{figure}
\includegraphics[angle=270,width=1.\linewidth]{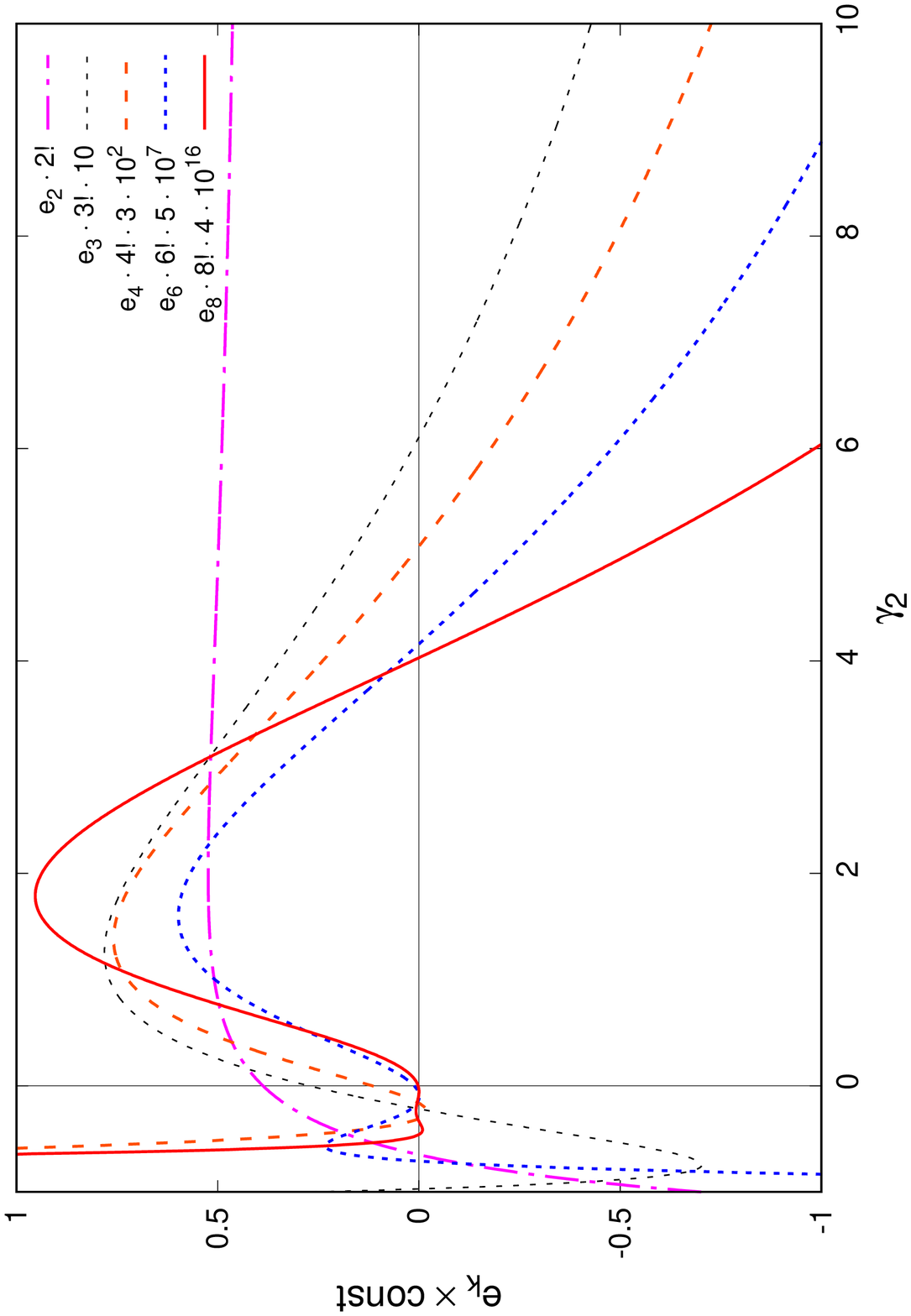}
\caption{Several functions $e_k$ calculated for $\rho(x,y)$ of Eq.~(\ref{eq:rho_eq_gaussian_times_poly}) as functions of the polynomial parameter $\gamma_2$.
The parameters of the Gaussian are $A=3/2$, $C=1$, $B=D=E=0$. The polynomial parameters are $\alpha_2=-1$, $\gamma_0=1$, $\alpha_1=\beta_1=\beta_2=0$. At $\gamma_2=1$, $\hat{\rho}$ is a positive operator. We scaled the $e_k$s appropriately for better visualization.\label{fig:eks_as_a_function_gamma2}}
\end{figure}

\begin{figure}
\includegraphics[angle=270,width=1.\linewidth]{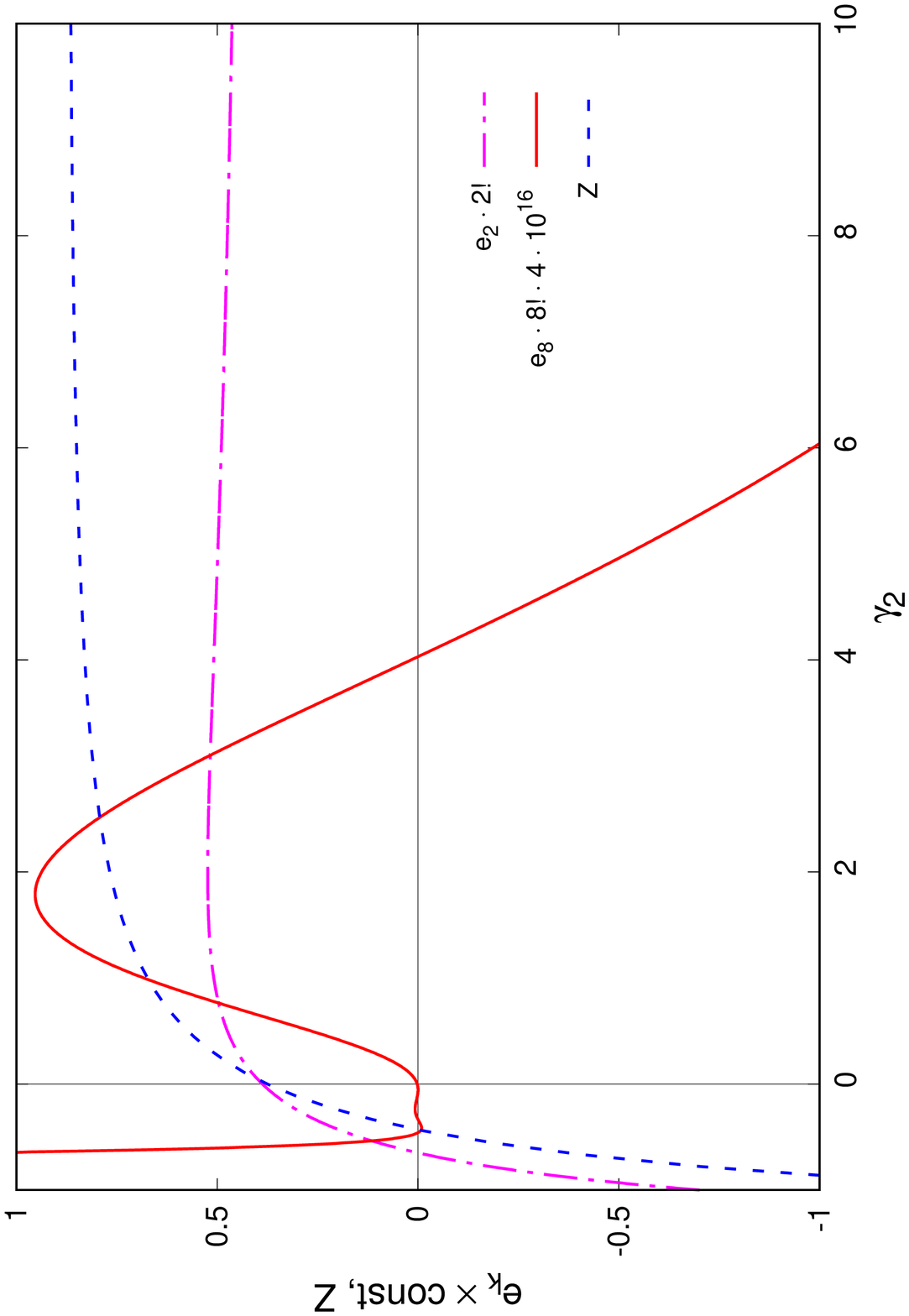}
\caption{Functions $e_2$, $e_8$ and $Z$ calculated for $\rho(x,y)$ of Eq.~(\ref{eq:rho_eq_gaussian_times_poly}) as functions of the polynomial parameter $\gamma_2$.
The  parameters of the Gaussian are $A=3/2$, $C=1$, $B=D=E=0$. The polynomial parameters are $\alpha_2=-1$, $\gamma_0=1$, $\alpha_1=\beta_1=\beta_2=0$. At $\gamma_2=1$, $\hat{\rho}$ is a positive operator. We scaled the $e_k$s appropriately for better visualization.\label{fig:ek2_and_Z_as_afunction_gamma2}}
\end{figure}


Finally, we compare our approach to the Robertson-Schr\"odinger uncertainty relations \cite{Rob, Trif}, which are frequently used to test the positivity of $\hat{\rho}$, see \cite{Flemming}. However, it is known since the $1980$s that fulfilling the uncertainty relations is necessary, but not sufficient, to ensure the positivity of $\hat{\rho}$ \cite{Narkowich1, Luef, Manko}. The uncertainty relations for essentially self-adjoint operators $\hat{A}$ and $\hat{B}$ read as
\begin{equation}
    \sigma_{RS} \ge \frac14 \left|\langle \hat{A}\hat{B}-\hat{B}\hat{A}\rangle \right|^2,
\end{equation}
where $\langle  \hat{O} \rangle=\textrm{Tr}\, \left\{ \hat{\rho}\hat{O}\right\}$.
Here
\begin{equation}
   \sigma_{RS}=\Delta \hat{A}^2 \Delta \hat{B}^2 -\left( \langle \hat{A}\hat{B}+\hat{B}\hat{A} \rangle /2 -\langle  \hat{A} \rangle \langle  \hat{B} \rangle \right )^2,
\end{equation}
where $\Delta \hat{O}^2=\langle  \hat{O}^2 \rangle-\langle  \hat{O} \rangle^2$. For the special choices of $\hat{A}=\hat{x}$ and $\hat{B}=\hat{p}$ we define
\begin{equation}
    Z \equiv \frac{\sigma_{RS}}{ \hbar^2}-\frac14 \ge 0.
\end{equation}
This method tests for $Z\geq 0$, and $Z$ is plotted in Fig.~\ref{fig:ek2_and_Z_as_afunction_gamma2} for the same parameters as in Figs.~\ref{fig:moments_as_a_function_gamma2},\ref{fig:eks_as_a_function_gamma2} together with $e_2$ and $e_8$. This clearly demonstrates that the indicator $Z$ for this choice of $\hat{A}$ and $\hat{B}$ is not much better than $e_2$, and much worse than $e_8$. However, in the special case of $\rho_G(x,y)$, the tests based on the Robertson-Schr\"odinger uncertainty relation and $e_2$ are equivalent, which is well known.

\section{Discussion and conclusions}
\label{sec:IV}

In summary, we have established a computationally tractable method to test the positivity of trace class integral operators via countably many conditions given by Proposition~\ref{p:claim}. A big advantage of our approach is that it extends the method of linear entropy, yet requires only elementary mathematics. In the case of physical applications, phase-space representation is meant to be described by the Wigner function, however, our method requires an extra step, namely the inverse of Eq.~(\ref{eq:Wigner}), which is usually straightforward. 

We have also demonstrated in Section~\ref{sec:III} via several cases that our method is efficient, consistent with the well understood cases, and we can converge rapidly to the interval of parameters where positivity occurs. Furthermore, we showed in Section~\ref{sec:III} that our approach is much more sensitive than the ones given by the methods of  Robertson-Schr\"odinger's uncertainty relation in a special case. 

From a longer-term perspective, our approach can serve as a control for every non-unitary dynamic in the phase-space representation to monitor non-physical evolution. This may apply to unitary dynamic as well, when numerical approximations are applied.

\ack
The authors are indebted to M.~A.~Csirik, M.~Kornyik, Z.~Kaufmann, and \'E.~Papp for helpful discussions. G.~Homa thanks for the funding from the National Research, Development and Innovation Office of Hungary (grants KKP133827 and TKP2021-NVA-04) and acknowledges the support from the Ministry of Innovation and Technology for the Quantum Information National Laboratory. R.~Balka was supported by the MTA Premium Postdoctoral Research Program, the National Research, Development and Innovation Office -- NKFIH, grants no.~124749, and 143285. This paper was supported by the J\'anos Bolyai Research Scholarship of the Hungarian Academy of Sciences. We acknowledge support from NKFI-134437, DFG under Germany's Excellence Strategy-Cluster of Excellence Matter and Light for Quantum Computing (ML4Q) EXC 2004/1-390534769, and AIDAS - AI, Data Analytics and Scalable Simulation - which is a Joint Virtual Laboratory gathering the Forschungszentrum J\"ulich (FZJ) and the French Alternative Energies and Atomic Energy Commission (CEA).

\section*{References}
\label{bib}


\begin{thebibliography}{100}

\bibitem{Neumann} von Neumann J 1932 {\it Mathematische Grundlagen der Quantenmechanik} (Springer-Verlag, Berlin).

\bibitem{Wigner} Wigner E P 1932 Phys.~Rev.~{\bf 40}, 749.

\bibitem{Weyl} Weyl H 1927 Z.~Phys.~{\bf 46}, 1.

\bibitem{Moyal} Moyal J E 1949 Proc.~Cambridge Phil.~Soc.~{\bf 45}, 99.

\bibitem{Hillery} Hillery M, O'Connell R F, Scully M O, and Wigner E P 1984 Phys.~Rep.~{\bf 106}, 121.

\bibitem{Lee} Lee H 1995 Phys.~Rep.~{\bf 259}, 150.

\bibitem{Schleich} Schleich W P 2001 {\it Quantum Optics in Phase Space} (Wiley-VCH, Weinheim).

\bibitem{Weinbub} Weinbub J and Ferry D K 2018 Applied Physics Reviews {\bf 5}, 041104.

\bibitem{book1} Breuer H-P and Petruccione F 2002 {\it The Theory of Open Quantum Systems} (Oxford University Press, Oxford).

\bibitem{HuPazZhang92} Hu B L, Paz J P, and Zhang Y 1992 Phys.~Rev.~D {\bf 45}, 2843.

\bibitem{Halliwell2012} Halliwell J J and Yu T 1996 Phys.~Rev.~D {\bf 53}, 2012.

\bibitem{Gnutzmann} Gnutzmann S, Haake F 1996 Z.~Phys.~B {\bf 101}, 263.

\bibitem{BLH} Homa G, Bern\'ad J Zs, and Lisztes L 2019 Eur.~Phys.~J.~D {\bf 73}, 53.

\bibitem{Johnston} Johnston W 2014 Amer.~Math.~Monthly~{\bf 121}, 249. 

\bibitem{Kastler} Kastler D 1965 Commun.~Math.~Phys.~{\bf 1}, 14.

\bibitem{Loupias1} Loupias G and Miracle-Sole S 1966 Commun.~Math.~Phys.~{\bf 2}, 31.

\bibitem{Loupias2} Loupias G and Miracle-Sole S 1967 Ann.~Inst. Henri Poincar\'{e} {\bf 6}, 39.

\bibitem{Narkowich1} Narcowich F J and O'Connell R F 1986 Phys.~Rev.~A {\bf 34}, 1.

\bibitem{Narkowich2} Narcowich F J 1989 J.~Math.~Phys.~{\bf 30}, 2565.

\bibitem{Werner} Br\"ocker T and Werner R F 1995 J.~Math.~Phys.~{\bf 36}, 62.

\bibitem{Luef} de Gosson M and Luef F 2007 Phys.~Lett.~A {\bf 364}, 453.

\bibitem{2019} Cordero E, de Gosson M and Nicola F 2019, ATMP {\bf 23}, 2061.

\bibitem{book3} Simon B 1976 {\it Trace Ideals and Their Applications Determinants of Linear Operators} (Cambridge University Press, Cambridge).

\bibitem{Rudin} Rudin W 1991 {\it Functional Analysis} (McGraw-Hill, New York).

\bibitem{SR} Reed M and Simon B 1980 {\it Methods of Modern Mathematical Physics I: Functional Analysis} (Academic, San Diego).

\bibitem{Brislawn} Brislawn C 1988 Proc.~Amer.~Math.~Soc.~{\bf 104}, 1181.

\bibitem{Gamel} Gamel O 2016 Phys.~Rev.~A {\bf 93}, 062320.

\bibitem{Macdonald} Macdonald I G 1995 {\it Symmetric functions and Hall polynomials} (Oxford University Press, New York).

\bibitem{Fredholm} Fredholm I 1903 Acta Math.~{\bf 27}, 365.

\bibitem{LQE} Díaz B, González D, Gutiérrez-Ruiz D, and Vergara J D 2022 Phys.~Rev.~A {\bf 105}, 062412.

\bibitem{BCsH} Bern\'ad J Zs, Homa G and Csirik M A 2018 Eur.~Phys.~J.~D~{\bf 72}, 212.

\bibitem{Rob} Robertson H P 1934 Phys.~Rev.~{\bf 46}, 794.

\bibitem{Trif} Trifonov D A 2002 Eur.~Phys.~J.~B {\bf 29}, 349.

\bibitem{Flemming} Fleming C H, Roura A, Hu B L 2011 Ann.~Phys.~{\bf 326}, 1207.

\bibitem{Manko} Manko O V, Manko V I, Marmo G, Sudarshan E C G, and Zaccaria F 2006 Phys.~Lett.~A {\bf 357}, 255.

\end{thebibliography}
\end{document}